\documentclass[aps,pra,twocolumn,superscriptaddress,amsmath,amssymb,floatfix]{revtex4-2}

\usepackage{graphicx}
\usepackage{dcolumn}
\usepackage{bm}
\usepackage{hyperref}
\usepackage{amsthm}
\usepackage{mathtools}

\hypersetup{
    colorlinks=true,
    linkcolor=blue,
    citecolor=blue,
    urlcolor=blue
}

\newtheorem{theorem}{Theorem}
\newtheorem{lemma}{Lemma}

\begin{document}

\title{Multi-Parameter Error Mitigation in Parametric Down-Conversion via \(\mathrm{SU}(1,1)\) Composite Segmentation}

\author{Nastaran Ragerdi Kashani}
\email{n_ragerdikashani@sbu.ac.ir}
\affiliation{Department of Physics, Shahid Beheshti University, Evin, Tehran, Iran}

\author{Rouhollah Karimzadeh}
\email[Corresponding author: ]{r_karimzadeh@sbu.ac.ir}
\affiliation{Department of Physics, Shahid Beheshti University, Evin, Tehran, Iran}

\date{19 September 2026}

\begin{abstract}
Spontaneous parametric down-conversion is a primary resource for generating non-classical light, yet phase mismatch fluctuations from temperature drift, angular misalignment, and domain fabrication errors reduce pair-generation efficiency and phase stability. Here, we analyze error mitigation in spontaneous parametric down-conversion using composite segmentation governed by $\mathrm{SU}(1,1)$ spatial dynamics. We show that odd spatial Fourier harmonics preserve error-cancellation flatness with pair-generation rates scaling asymptotically as $m^{-2}$, while a coordinate-dilation symmetry provides a design prescription for translating configurations across distinct operating wavelengths ($405$, $532$, and $775\,\mathrm{nm}$). Furthermore, an open-loop thermo-optic bias counteracts off-axis angular drift to broaden the 90\%-efficiency angular acceptance half-width by $3.46\times$, and up to two missing poling domains are tolerated with no numerically resolved loss in relative conversion efficiency. Across a $5^\circ\mathrm{C}$ thermal window, the generated pairs sustain a relative conversion efficiency of $\eta \ge 0.954$ with the modeled phase-coherence metric remaining above $0.95$, providing a structurally robust approach for entangled photon generation without requiring continuous closed-loop feedback stabilization.
\end{abstract}

\maketitle

% =========================================================================
% SECTION I: INTRODUCTION
% =========================================================================
\section{Introduction}
\label{sec:introduction}

Continuous-variable quantum information protocols—such as measurement-based quantum computing on continuous-variable cluster states~\cite{Menicucci2006, Asavanant2019, Larsen2019} and entanglement-based quantum communication~\cite{Ekert1991, Weedbrook2012}—rely on the generation of two-mode squeezed vacuum (TMSV) resource states. A critical requirement for these protocols is maintaining low quadrature noise variance and stable optical phase references~\cite{Braunstein2005}. Because continuous-variable operations and homodyne detections probe specific field quadratures, phase fluctuations induce quadrature rotation, coupling anti-squeezed excess noise into the target measurement quadrature and degrading two-mode quantum correlations.

Spontaneous parametric down-conversion (SPDC) in quadratic ($\chi^{(2)}$) media provides the principal physical mechanism for producing continuous-variable entanglement~\cite{Boyd2020}. In the continuous-wave, undepleted classical pump regime, the spatial propagation dynamics is governed by the non-compact Lie algebra $\mathrm{SU}(1,1)$~\cite{Yurke1986}. In the parametric-gain regime, this non-compact symmetry is naturally represented by hyperbolic rather than bounded periodic trajectories, making the system particularly sensitive to detuning perturbations. Consequently, in standard periodically poled quasi-phase-matched (QPM) structures~\cite{Fejer1992}, the process is acutely sensitive to the wave-vector mismatch $\Delta k$. Ambient thermal fluctuations, angular beam divergence, and fabrication inconsistencies alter the effective phase mismatch, driving the interaction away from nominal phase matching, thereby reducing conversion efficiency and introducing phase drift.

Active temperature stabilization via thermoelectric coolers requires continuous monitoring and packaging volume, which presents a significant overhead for integrated optical setups and field-deployed nodes. Conversely, passive techniques present intrinsic trade-offs: ultra-thin crystals and metasurfaces~\cite{Okoth2019} relax phase-matching constraints but offer low conversion efficiency, demanding intense pump powers that increase unwanted multi-pair emission; chirped or aperiodically poled gratings can broaden the accessible conversion bandwidth~\cite{Chekhova2018}, but generally target spectral engineering rather than explicit dynamical cancellation of parameter drifts; while custom domain engineering~\cite{Branczyk2011, Tambasco2016, Graffitti2018} optimizes joint spectral amplitudes for static configurations without providing dynamical resilience against parameter drift. In quantum control, composite pulse sequences have long been utilized to mitigate systematic detuning errors in $\mathfrak{su}(2)$ systems such as nuclear magnetic resonance~\cite{Levitt1986, Shaka1987, Vandersypen2005}, trapped ions, and superconducting circuits~\cite{Cummins2003, Torosov2011, Merrill2014}. Recently, Erew \emph{et al.}~\cite{Erew2026} adapted this concept to the non-compact Lie algebra $\mathrm{SU}(1,1)$ via detuning-modulated composite segmentation (DMCS), showing that operating within the harmonic dynamical regime enables first-order robustness against detuning errors in a collinear $532\,\mathrm{nm}$ KTP crystal.

While the initial demonstration of DMCS in Ref.~\cite{Erew2026} established the viability of composite segmentation for parametric down-conversion, several practical and theoretical challenges remain unaddressed. The behavior of composite $\mathrm{SU}(1,1)$ sequences under higher-order spatial Fourier harmonics—relevant when fabrication constraints limit the minimum realizable domain-inversion period—has not been explored. Furthermore, in realistic collection geometries involving finite apertures, off-axis emission introduces an asymmetric angular contribution to the phase mismatch that can compromise symmetric error cancellation. Additionally, the portability of composite designs across distinct quantum operational wavelengths requires a rigorous scaling formalism, and the resilience of such segmented sequences to discrete fabrication imperfections, such as missing ferroelectric domains, remains to be quantified from a phase-stability perspective.

In this paper, we address these challenges by developing an error-mitigation framework for $\mathrm{SU}(1,1)$ parametric state preparation. By analyzing the spatial Fourier decomposition of composite gratings, we show analytically (Lemma~1) that odd-order harmonics preserve composite error cancellation to leading order with generation rates scaling asymptotically as $m^{-2}$, while even orders vanish due to parity cancellation. We then model the angular phase mismatch in non-collinear geometries and show that an open-loop thermo-optic control bias can be used to re-center the passband, expanding the operational angular acceptance and protecting single-mode collection efficiency. A coordinate-dilation theorem (Theorem~1) provides a formal scaling prescription establishing that error-canceling trajectories are preserved across different pump-signal configurations, enabling translation across pump wavelengths of $405\,\mathrm{nm}$, $532\,\mathrm{nm}$, and $775\,\mathrm{nm}$ (with degenerate emission near $1550\,\mathrm{nm}$), provided physical parameters are redesigned to satisfy the scaling relations. We also evaluate the impact of discrete domain omissions on the resulting $\mathrm{SU}(1,1)$ evolution, establishing an operational margin within which fabrication defects introduce no measurable efficiency loss. Finally, we benchmark the output states through relative conversion efficiency and modeled phase coherence, demonstrating robust operation across a broad thermal window in comparison with conventional periodic crystals.

% =========================================================================
% SECTION II: ALGEBRAIC FRAMEWORK
% =========================================================================
\section{Algebraic Framework and $\mathrm{SU}(1,1)$ State Space}
\label{sec:algebraic_framework}

We analyze spontaneous parametric down-conversion within a structured quadratic ($\chi^{(2)}$) medium in the undepleted classical pump limit. In the spatial coupled-mode framework along the longitudinal axis $z$, the spatial evolution of the signal and idler mode operators is governed by~\cite{Yurke1986, Boyd2020}:
\begin{align}
\frac{d}{dz} \hat{a}_s(z) &= -i \kappa A_p e^{-i \Delta k z} \hat{a}_i^\dagger(z), \label{eq:spatial_eom_s} \\
\frac{d}{dz} \hat{a}_i^\dagger(z) &= i \kappa A_p e^{i \Delta k z} \hat{a}_s(z), \label{eq:spatial_eom_i}
\end{align}
where $\Delta k = k_p - k_s - k_i$ is the wave-vector mismatch, $A_p$ is the undepleted pump amplitude, and $\kappa$ represents the parametric coupling strength. Setting $\Omega \equiv \kappa A_p$ and defining the slowly varying spatial field operators $\hat{b}_s(z) \equiv \hat{a}_s(z)$ and $\hat{b}_i(z) \equiv \hat{a}_i(z)$, the spatial propagation reads:
\begin{equation}
\frac{d}{dz} \begin{pmatrix} \hat{b}_s(z) \\ \hat{b}_i^\dagger(z) \end{pmatrix} = 
\begin{pmatrix} 
0 & -i \Omega e^{-i \Delta k z} \\ 
i \Omega e^{i \Delta k z} & 0 
\end{pmatrix} 
\begin{pmatrix} \hat{b}_s(z) \\ \hat{b}_i^\dagger(z) \end{pmatrix}.
\label{eq:eom_heisenberg}
\end{equation}
The underlying algebraic structure of Eq.~\eqref{eq:eom_heisenberg} is isomorphic to the non-compact Lie algebra $\mathrm{SU}(1,1)$, spanned by the bilinear generators:
\begin{equation}
\hat{K}_+ = \hat{b}_s^\dagger \hat{b}_i^\dagger, \quad \hat{K}_- = \hat{b}_s \hat{b}_i, \quad \hat{K}_0 = \frac{1}{2} \left( \hat{b}_s^\dagger \hat{b}_s + \hat{b}_i^\dagger \hat{b}_i + 1 \right),
\end{equation}
which fulfill $[\hat{K}_0, \hat{K}_\pm] = \pm \hat{K}_\pm$ and $[\hat{K}_-, \hat{K}_+] = 2 \hat{K}_0$~\cite{Yurke1986}.

To parameterize the spatial trajectory geometrically~\cite{Erew2026}, we introduce the Hermitian observables:
\begin{subequations}
\label{eq:hyperboloid_coords}
\begin{align}
\hat{u}(z) &= \hat{c}_s(z)\hat{c}_i(z) + \hat{c}_s^\dagger(z)\hat{c}_i^\dagger(z), \\
\hat{v}(z) &= i\left[ \hat{c}_s(z)\hat{c}_i(z) - \hat{c}_s^\dagger(z)\hat{c}_i^\dagger(z) \right], \\
\hat{w}(z) &= \hat{c}_s^\dagger(z)\hat{c}_s(z) + \hat{c}_i(z)\hat{c}_i^\dagger(z) - 1,
\end{align}
\end{subequations}
with $\hat{c}_j(z) \equiv \hat{b}_j(z) e^{i \frac{\Delta k}{2} z}$. The expectation vector $\vec{R}(z) = (\langle \hat{u} \rangle, \langle \hat{v} \rangle, \langle \hat{w} \rangle)^T \equiv (u, v, w)^T$ obeys the autonomous precession equation:
\begin{equation}
\frac{d}{dz} \begin{pmatrix} u \\ v \\ w+1 \end{pmatrix} = 
\begin{pmatrix} 
0 & \Delta k & 0 \\ 
-\Delta k & 0 & 2\Omega \\ 
0 & 2\Omega & 0 
\end{pmatrix} 
\begin{pmatrix} u \\ v \\ w+1 \end{pmatrix}.
\label{eq:lorentz_precession}
\end{equation}
Equation~\eqref{eq:lorentz_precession} preserves the indefinite quadratic Casimir metric on the two-sheeted unit hyperboloid:
\begin{equation}
u^2(z) + v^2(z) - [w(z) + 1]^2 = -1.
\label{eq:casimir_invariant}
\end{equation}
The metric $w(z) = 2|\beta(z)|^2$ corresponds to the total mean photon number, representing twice the generated pair count $\mu(z) = |\beta(z)|^2$. The unperturbed vacuum input state corresponds to the point $(u, v, w) = (0, 0, 0)$ on this invariant surface.

In detuning modulated composite segmentation (DMCS)~\cite{Erew2026}, the crystal is partitioned into $N$ discrete domains of lengths $l_j$ and local detunings $\Delta k_j$. The composite Bogoliubov propagator across the assembly is given by:
\begin{equation}
\mathcal{U}^{(N)}(\epsilon) = U_N U_{N-1} \cdots U_2 U_1 = 
\begin{pmatrix} 
\alpha_N(\epsilon) & \beta_N(\epsilon) \\ 
\beta_N^*(\epsilon) & \alpha_N^*(\epsilon) 
\end{pmatrix},
\end{equation}
where $U_j \equiv U_j(\Delta k_j + \epsilon, \Omega_j, l_j)$ represents the propagator of the $j$-th segment.
where $\epsilon$ parameterizes experimental perturbations. Robustness against variations in $\epsilon$ is imposed by setting successive derivatives to zero at the working point:
\begin{equation}
\left. \frac{\partial^m |\beta_N(\epsilon)|^2}{\partial \epsilon^m} \right|_{\epsilon=0} = 0, \quad m \in \{1, 2, \dots, M\}.
\label{eq:flatness_condition}
\end{equation}
Operating within the harmonic bound $(\Delta k_j / 2)^2 > \Omega_j^2$, the generalized frequency $g_j \equiv \sqrt{(\Delta k_j / 2)^2 - \Omega_j^2}$ is strictly real. In this oscillatory regime, the transfer matrix elements of each segment evaluate explicitly to trigonometric functions:
\begin{align}
\alpha_j &= \cos(g_j l_j) + i \frac{\Delta k_j}{2 g_j} \sin(g_j l_j), \label{eq:alpha_trig} \\
\beta_j &= -i \frac{\Omega_j}{g_j} \sin(g_j l_j). \label{eq:beta_trig}
\end{align}
The segment lengths are fixed by the generalized quarter-cycle relation $g_j l_j = \pi / 2$~\cite{Erew2026}:
\begin{equation}
l_j = \frac{\pi}{2 \sqrt{(\Delta k_j / 2)^2 - \Omega_j^2}}.
\label{eq:segment_lengths}
\end{equation}

% =========================================================================
% SECTION III: HIGHER HARMONICS
% =========================================================================
\section{Higher-Order Harmonic Poling and Spatial Parity Selection}
\label{sec:harmonics}

When sub-micron domain widths fall below standard lithographic limits, higher-order quasi-phase-matching (QPM) can be implemented~\cite{Fejer1992}. A spatial square-wave inversion profile with duty cycle $D = 0.5$ admits the Fourier series:
\begin{equation}
\chi^{(2)}(z) = \chi_0^{(2)} \sum_{m=-\infty}^\infty G_m e^{i m K_{\text{pol}} z}, \quad G_m = \frac{2}{m\pi}\sin\left(\frac{m\pi}{2}\right).
\end{equation}
The effective coupling amplitude driving the $m$-th order process is given by:
\begin{equation}
\Omega_m = 
\begin{cases} 
\frac{(-1)^{(m-1)/2}}{m} \Omega_1, & m \in \{1, 3, 5, \dots\} \\ 
0, & m \in \{2, 4, 6, \dots\} 
\end{cases}
\label{eq:omega_parity}
\end{equation}

\begin{figure*}[t]
\centering
\includegraphics[width=\linewidth]{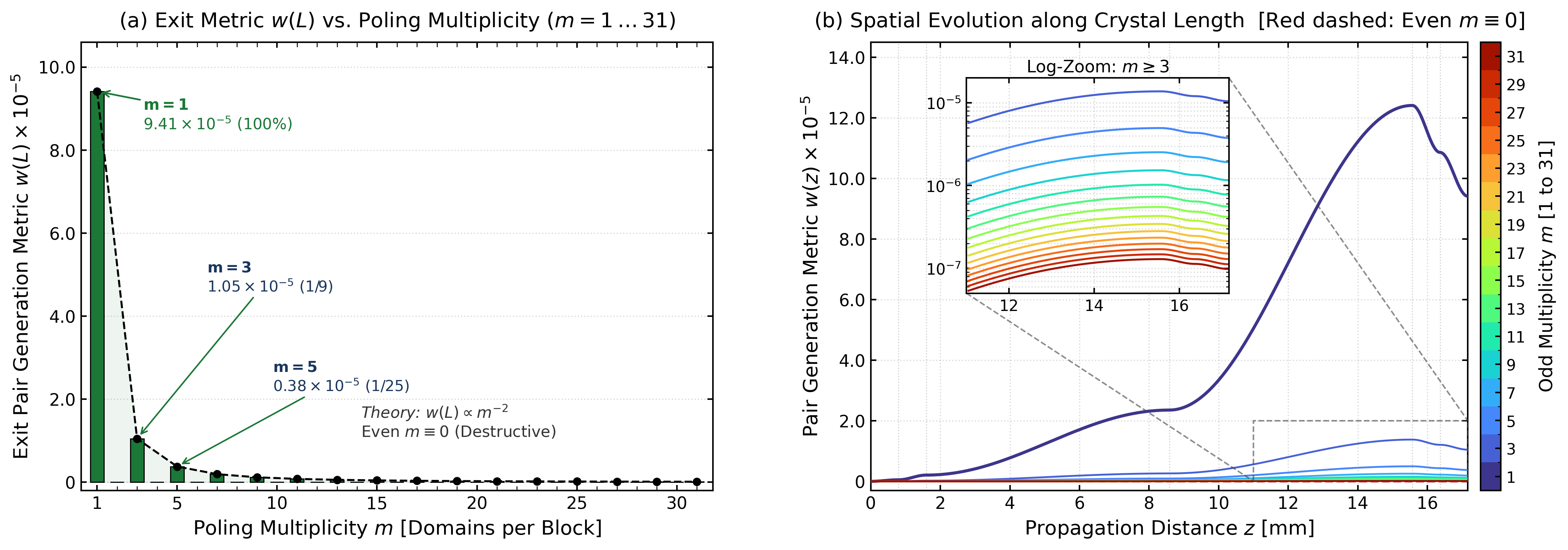}
\caption{(a) Exit pair generation metric $w(L)$ across poling multiplicities $m = 1 \dots 31$. Odd orders approach the predicted asymptotic $m^{-2}$ scaling law (dashed curve). Even orders vanish identically for an ideal symmetric square wave. (b) Trajectory of $w(z)$ along the crystal axis for all sixteen odd harmonics. Inset: logarithmic separation of trajectories for $m \ge 3$.}
\label{fig:w_m}
\end{figure*}

\begin{lemma}[Harmonic Parity Selection and Asymptotic Scaling]
In the low-gain regime ($|\beta_j| \ll 1$) and to leading order in $|\Omega_m| / |\Delta k_j|$, the composite error-nullification conditions defined in Eq.~\eqref{eq:flatness_condition} are approximately preserved for any odd spatial harmonic $m \in \{1, 3, 5, \dots\}$, with the exit pair conversion metric scaling asymptotically as:
\begin{equation}
w(L; m) = \frac{w(L; 1)}{m^2} + \mathcal{O}\left( \frac{\Omega_m^2}{\Delta k_j^2} \right).
\label{eq:m_squared_scaling}
\end{equation}
For the ideal 50\% duty-cycle square-wave profile considered here, the even Fourier coefficients vanish ($G_m = 0$), so isolated even-order contributions are absent ($w(z; m) \equiv 0$).
\end{lemma}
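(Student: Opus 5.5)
The plan is to work entirely in the low-gain (first-order Born) regime. There each segment's Bogoliubov amplitude is linear in the coupling, and the composite amplitude is a coherent sum of segment contributions.

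\emph{Step 1: linearize the composite amplitude.} I would expand the product $\mathcal{U}^{(N)} = U_N\cdots U_1$ to first order in the $\beta_j$. Since $|\beta_j|\ll 1$, the diagonal entries are pure phases, $\alpha_j \simeq e^{i\phi_j}$ up to $\mathcal{O}(\Omega_j^2/\Delta k_j^2)$, by Eqs.~\eqref{eq:alpha_trig}--\eqref{eq:beta_trig}. The off-diagonal entry then becomes
\begin{equation}
\beta_N(\epsilon) \simeq \sum_{j=1}^N \Big(\prod_{k>j}\alpha_k\Big)\,\beta_j \Big(\prod_{k<j}\alpha_k^*\Big) + \mathcal{O}(\beta^3).
\end{equation}
The phases $\phi_k$ depend only on $\Delta k_k+\epsilon$ and $l_k$, again up to corrections of order $\Omega^2/\Delta k^2$. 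Each $\beta_j$ is proportional to $\Omega_j$ times a function of $(\Delta k_j+\epsilon,l_j)$ alone.

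\emph{Step 2: substitute the $m$-th order coupling.} Under $m$-th order poling, every segment coupling is replaced by $\Omega_{j,m} = \frac{(-1)^{(m-1)/2}}{m}\Omega_{j,1}$ from Eq.~\eqref{eq:omega_parity}. The effective detuning $\Delta k_j - mK_{\text{pol}}$ is designed to equal the first-order design value, so the phases $\phi_k$ and the segment lengths from Eq.~\eqref{eq:segment_lengths} coincide with the $m=1$ design at leading order. The factor $g_j$ shifts only by $\mathcal{O}(\Omega^2/\Delta k)$. Hence $\beta_N^{(m)}(\epsilon) = \frac{(-1)^{(m-1)/2}}{m}\beta_N^{(1)}(\epsilon)\,[1+\mathcal{O}(\Omega_m^2/\Delta k_j^2)]$ uniformly in a neighborhood of $\epsilon=0$.

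\emph{Step 3: transfer flatness and the $m^{-2}$ law.} Squaring gives $|\beta_N^{(m)}(\epsilon)|^2 = m^{-2}|\beta_N^{(1)}(\epsilon)|^2$ up to the stated corrections. Every $\epsilon$-derivative inherits this scaling, so the conditions \eqref{eq:flatness_condition} that hold for $m=1$ hold for odd $m$ up to the same order. Using $w=2|\beta|^2$ from the hyperboloid parametrization \eqref{eq:casimir_invariant} then yields Eq.~\eqref{eq:m_squared_scaling}. For even $m$, $G_m=\frac{2}{m\pi}\sin(m\pi/2)=0$, so $\Omega_m=0$. Then $g_j=|\Delta k_j|/2$, $\beta_j\equiv 0$, and the propagators are diagonal. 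The vacuum point $(0,0,0)$ is therefore a fixed point of Eq.~\eqref{eq:lorentz_precession}, giving $w(z;m)\equiv 0$.

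\emph{Main obstacle.} The delicate step is Step 2. The quarter-cycle lengths depend on $\Omega_j$ through $g_j$, so a literal redesign for each $m$ changes $l_j$. I would treat this by expanding $g_j = \frac{|\Delta k_j|}{2}\big(1 - 2\Omega_j^2/\Delta k_j^2+\dots\big)$. This shows that the $m$-dependence of $l_j$, and hence of the phases, is $\mathcal{O}(\Omega_m^2/\Delta k_j^2)$, which is exactly the error term claimed. A further point to check is that the derivative of the error term in $\epsilon$ remains of the same order. This is why flatness is asserted only approximately.
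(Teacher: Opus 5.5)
Your proposal is correct and follows essentially the same route as the paper: low-gain linearization (the paper calls it a first-order Magnus expansion), $g_j\approx|\Delta k_j|/2$ so the quarter-cycle lengths are unchanged to $\mathcal{O}(\Omega_m^2/\Delta k_j^2)$, hence $\beta_N(m)\approx\beta_N(1)/m$ and $w\propto m^{-2}$ after squaring. Your version is somewhat more complete than the paper's proof, which leaves the following implicit: you track the sign $(-1)^{(m-1)/2}$, you explicitly carry the $\epsilon$-derivative (flatness) conditions over to odd $m$, and you treat the even-order case via the vacuum fixed point of Eq.~\eqref{eq:lorentz_precession}.
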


\begin{proof}
In typical low-gain experimental implementations, $|\Delta k_j/2| \gg \Omega_1 \ge |\Omega_m|$. The local generalized frequency $g_j(m) = \sqrt{(\Delta k_j/2)^2 - \Omega_m^2}$ remains approximately equal to $|\Delta k_j|/2$ to within $\mathcal{O}(\Omega_m^2/\Delta k_j^2)$. To this asymptotic order, the quarter-cycle condition $g_j(m) l_j \approx \pi/2$ is preserved across all orders without modifying the nominal segment lengths $l_j$. Evaluating the off-diagonal Bogoliubov element for segment $j$ via Eq.~\eqref{eq:beta_trig} yields $\beta_j(m) \approx -i(\Omega_m/g_j) \sin(g_j l_j)$. In the low-gain regime ($|\beta_j| \ll 1$), the first-order Magnus expansion gives, to leading order, $\beta_N(m) \approx \beta_N(1)/m$. Taking the squared modulus yields $w(L; m) = 2|\beta_N(m)|^2 \approx w(L; 1)/m^2$.
\end{proof}

Figure~\ref{fig:w_m}(a) displays the numerical integration of Eq.~\eqref{eq:lorentz_precession} for multiplicities $m = 1$ through $31$ using the parameters of Design~I in Ref.~\cite{Erew2026}. The terminal value $w(L)$ drops from $9.41 \times 10^{-5}$ ($m=1$) to $1.05 \times 10^{-5}$ ($m=3$) and $0.38 \times 10^{-5}$ ($m=5$), in agreement with Eq.~\eqref{eq:m_squared_scaling}. As evidenced by the spatial trajectories in Fig.~\ref{fig:w_m}(b), the dynamical error-cancellation mechanism remains structurally unperturbed across all odd orders. Because the effective driving parameter scales with pump field $\Omega_m \propto \sqrt{P_p}/m$, the exit pair conversion metric scales linearly with pump power, $w(L) \propto |\beta|^2 \propto P_p / m^2$. Hence, this power-law penalty can be offset by scaling pump power by $P_p' \sim m^2 P_p$, provided the interaction remains safely in the low-gain regime to avoid multipair emissions.

% =========================================================================
% SECTION IV: OFF-AXIS MOMENTUM
% =========================================================================
\section{Off-Axis Acceptance and Thermal Bias Tuning}
\label{sec:off_axis}

In symmetric non-collinear degenerate SPDC with collinear pump, external signal emission at angle $\theta$ introduces an angular phase mismatch:
\begin{equation}
\Delta k(\theta) \approx \frac{4\pi n_s}{\lambda_s}\left( 1 - \sqrt{1 - \frac{\sin^2\theta}{n_s^2}} \right) \approx \frac{2\pi}{\lambda_s n_s}\theta^2.
\label{eq:theta_mismatch}
\end{equation}

\begin{figure*}[t]
\centering
\includegraphics[width=\linewidth]{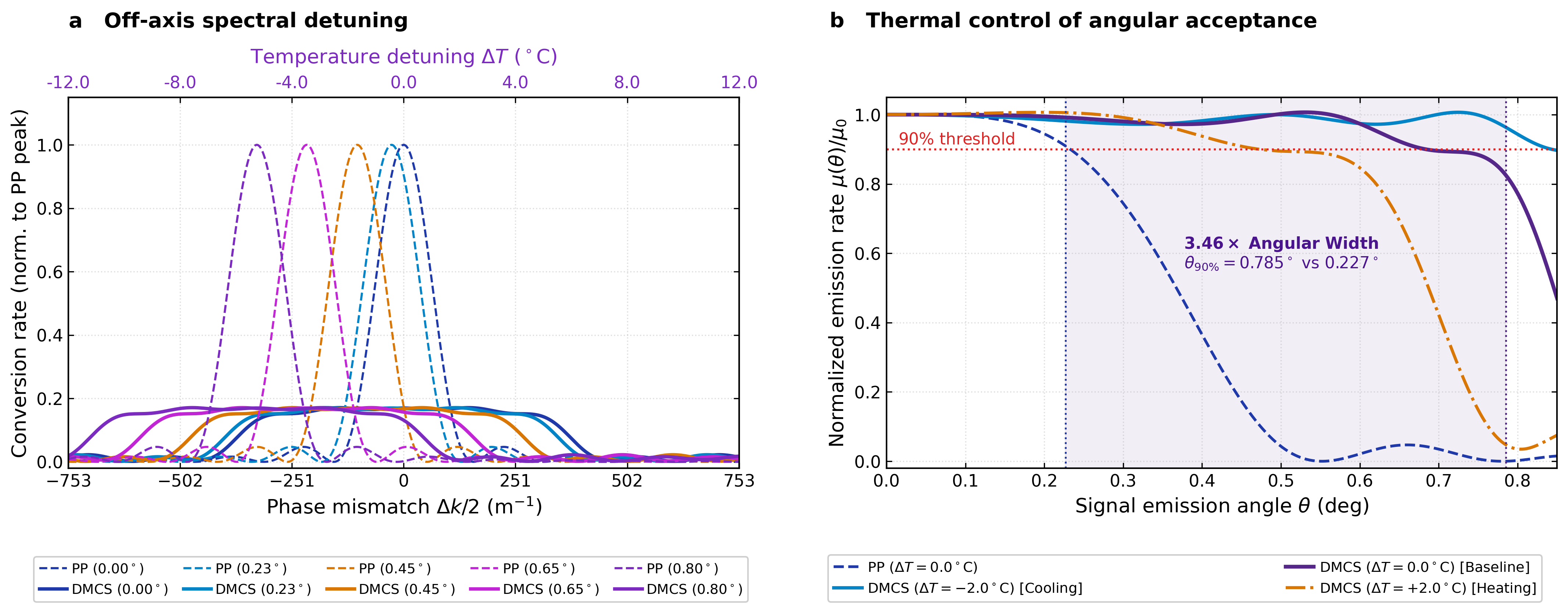}
\caption{(a) Normalized conversion efficiency versus detuning $\Delta k/2$ and temperature for emission angles up to $\theta = 0.80^\circ$ (solid lines: DMCS, dashed lines: PP). (b) Angular acceptance profile at the 90\% threshold. At baseline ($\Delta T = 0^\circ\mathrm{C}$), the DMCS architecture provides an acceptance half-angle of $\theta_{90\%} = 0.785^\circ$, representing a $3.46\times$ enhancement over the PP baseline ($\theta_{90\%}^{\text{PP}} = 0.227^\circ$). Controlled thermal offsets ($\Delta T = -2.0^\circ\mathrm{C}$ cooling, $+2.0^\circ\mathrm{C}$ heating) dynamically steer and broaden the acceptance window.}
\label{fig:thetas}
\end{figure*}

Because $\Delta k(\theta) \ge 0$ is positive-semidefinite, angular divergence introduces an asymmetric, unidirectional drift that breaks the symmetric error cancellation around the collinear working point. To re-center the state trajectory within the flat-top passband, the thermo-optic dispersion $\xi_T \equiv \partial \Delta k / \partial T \approx 125.5\,\mathrm{m}^{-1\circ}\mathrm{C}^{-1}$ acts as an accessible linear control bias:
\begin{equation}
\epsilon_{\text{total}}(\theta, \Delta T) = \Delta k(\theta) + \xi_T \Delta T_{\text{bias}}.
\label{eq:net_drift}
\end{equation}
At baseline ($\Delta T = 0^\circ\mathrm{C}$), as shown in Fig.~\ref{fig:thetas}(b), the composite design achieves a 90\%-efficiency angular acceptance half-width of $\theta_{90\%} = 0.785^\circ$, representing a $3.46$-fold broadening compared to the standard periodically poled benchmark ($\theta_{90\%}^{\text{PP}} = 0.227^\circ$). When collecting at larger off-axis angles, applying a calibrated open-loop negative thermal control bias ($\Delta T = -2.0^\circ\mathrm{C}$) shifts the flat-top passband to offset the positive angular drift, preserving pair-generation efficiency across the collection aperture.

% =========================================================================
% SECTION V: SCALE INVARIANCE
% =========================================================================
\section{Scale Invariance across Wavelength Regimes}
\label{sec:scale_invariance}

The geometry of the $\mathrm{SU}(1,1)$ state trajectories exhibits a scale-invariance symmetry, providing a formal scaling prescription across distinct operating wavelengths via a dimensionless parameter $r > 0$.

\begin{theorem}[Dimensionless Scaling Invariance of the Composite Propagator]
Let $\mathcal{S}_0 \equiv \{l_j, \Delta k_j, \Omega_j\}_{j=1}^N$ denote an $N$-segment sequence satisfying Eq.~\eqref{eq:flatness_condition}. Under the coordinate transformation:
\begin{equation}
l_j' = r \cdot l_j, \quad \Delta k_j' = \frac{\Delta k_j}{r}, \quad \Omega_j' = \frac{\Omega_j}{r}, \quad \epsilon' = \frac{\epsilon}{r},
\label{eq:scaling_transform}
\end{equation}
the scaled composite transfer matrix satisfies:
\begin{equation}
\mathcal{U}^{(N)\prime}(\epsilon') = \mathcal{U}^{(N)}(\epsilon),
\end{equation}
preserving the terminal state coordinates on the hyperboloid and the order of error cancellation.
\end{theorem}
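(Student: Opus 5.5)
The plan is to reduce the claim to a statement about single segments, because the composite propagator is an ordered product. Each $U_j$ is the transfer matrix obtained by integrating the linear system \eqref{eq:eom_heisenberg} across a segment of length $l_j$, with constant coefficients $\Delta k_j+\epsilon$ and $\Omega_j$. In the harmonic regime it has the closed form \eqref{eq:alpha_trig}--\eqref{eq:beta_trig}, with $g_j=\sqrt{((\Delta k_j+\epsilon)/2)^2-\Omega_j^2}$. I will show that $U_j(\Delta k_j'+\epsilon',\Omega_j',l_j')=U_j(\Delta k_j+\epsilon,\Omega_j,l_j)$ for every $j$. The identity $\mathcal{U}^{(N)\prime}(\epsilon')=\mathcal{U}^{(N)}(\epsilon)$ then follows at once by taking the product in the same order.

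For a single segment I will argue directly from the formulas. Under \eqref{eq:scaling_transform} the detuning becomes $(\Delta k_j+\epsilon)/r$ and the coupling becomes $\Omega_j/r$. Since $r>0$, this gives $g_j'=g_j/r$, so the harmonic-regime inequality is preserved and $g_j'$ stays real. Every entry of $U_j$ depends on the parameters only through the phase $g_j l_j$ and the ratios $(\Delta k_j+\epsilon)/(2g_j)$ and $\Omega_j/g_j$. All three are invariant: $g_j'l_j'=g_jl_j$, and in each ratio the factors of $r$ cancel. In particular the quarter-cycle condition \eqref{eq:segment_lengths} is preserved exactly, so the scaled $l_j'$ is the length the DMCS prescription would itself assign to $(\Delta k_j',\Omega_j')$.

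As an independent check that does not rely on the harmonic regime, I will change variables $z=rz'$ in the precession equation \eqref{eq:lorentz_precession}. The generator matrix is linear in $(\Delta k,\Omega)$, so it picks up a factor $1/r$ under the scaling, while $d/dz'=r\,d/dz$. The two factors cancel, and the flow over $[0,l_j']$ in $z'$ coincides with the flow over $[0,l_j]$ in $z$. The one subtlety is the gauge factor $e^{i\Delta k z/2}$ in $\hat c_j$. Since $\Delta k'z'=\Delta k z$, this factor is also invariant, so the hyperboloid coordinates match pointwise.

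The remaining claims follow from this identity. Invariance of the propagator gives $\beta_N'(\epsilon')=\beta_N(\epsilon)$, so the terminal point $(u,v,w)$ is unchanged and the Casimir \eqref{eq:casimir_invariant} holds trivially. For the order of error cancellation, I will apply the chain rule with $\epsilon=r\epsilon'$, which gives $\partial^m_{\epsilon'}|\beta_N'|^2=r^m\,\partial^m_\epsilon|\beta_N|^2$. Vanishing of the derivatives in \eqref{eq:flatness_condition} for $m\le M$ therefore transfers to the scaled sequence, and the first nonvanishing order is the same.

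I expect the main obstacle to be interpretive rather than computational: making precise that $\epsilon$ must be scaled together with $\Delta k_j$. For a fixed physical drift, such as a given temperature change, the flatness window in physical units is rescaled by $1/r$ and is not literally preserved. I will state this explicitly, so that the theorem is read as an invariance of the dimensionless design, consistent with the requirement that physical parameters be redesigned at each wavelength.
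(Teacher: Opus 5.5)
Your proof is correct and follows the paper's proof: the phase $g_jl_j$ and the ratios $(\Delta k_j+\epsilon)/(2g_j)$ and $\Omega_j/g_j$ are invariant, so each $U_j$ is invariant and hence so is the ordered product; your chain-rule relation $\partial^m_{\epsilon'}|\beta_N'|^2=r^m\,\partial^m_\epsilon|\beta_N|^2$ is also more precise than the paper's claim that the derivatives ``remain identical'', since only their vanishing is preserved. There is one slip in your optional ODE check: because $l_j'=rl_j$, the dilation must be $z'=rz$, so $d/dz'=r^{-1}\,d/dz$ and this $1/r$ cancels the $1/r$ in the generator, whereas with $z=rz'$ and $d/dz'=r\,d/dz$ as you wrote, the factors compound to $r^{-2}$ (your own relation $\Delta k'z'=\Delta k z$ already presupposes $z'=rz$).
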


\begin{proof}
Under Eq.~\eqref{eq:scaling_transform}, the accumulated dynamic phase within each segment transforms as $g_j' l_j' = \sqrt{(\Delta k_j'/2)^2 - \Omega_j'^2} (r l_j) = \sqrt{(\Delta k_j/2)^2 - \Omega_j^2} l_j = g_j l_j$. Applying this invariant phase to Eqs.~\eqref{eq:alpha_trig} and~\eqref{eq:beta_trig}, the individual transfer matrix entries satisfy $\alpha_j' = \alpha_j$ and $\beta_j' = \beta_j$. Because each matrix factor in the ordered product is individually invariant, the total propagator $\mathcal{U}^{(N)}$ and its derivatives with respect to the scaled error parameter remain identical.
\end{proof}

\begin{figure*}[t]
\centering
\includegraphics[width=\linewidth]{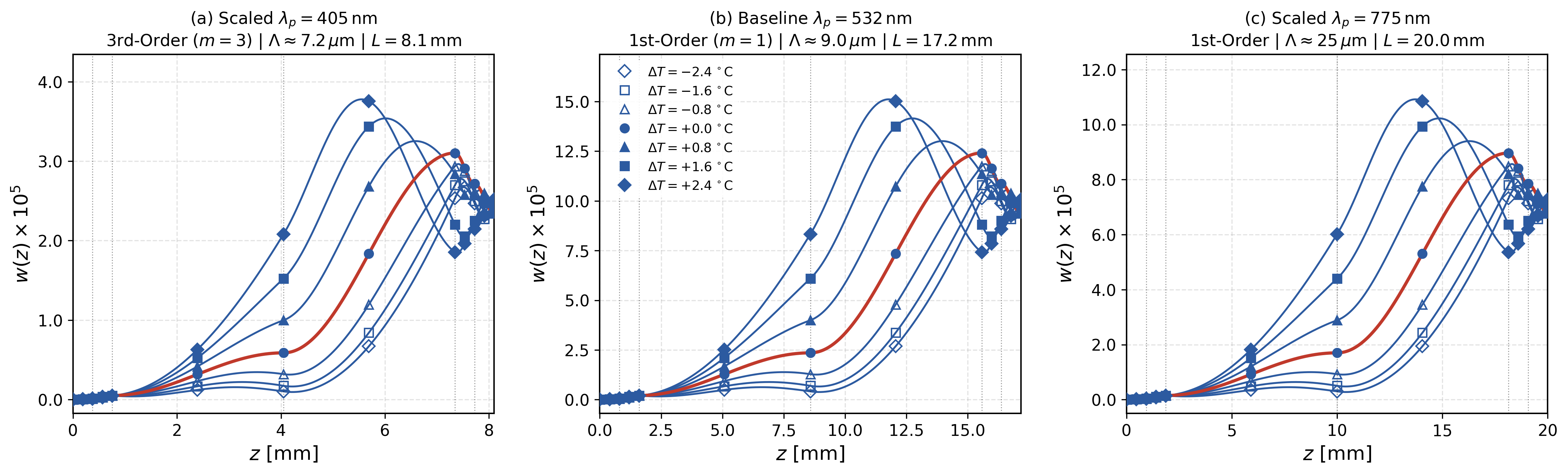}
\caption{\textbf{Scale Invariance Across Operating Wavelength Regimes.} Spatial evolution of the metric $w(z) \times 10^5$ under temperature detunings $\Delta T \in [-2.4^\circ\mathrm{C}, +2.4^\circ\mathrm{C}]$ (red solid curves: unperturbed $\Delta T = 0^\circ\mathrm{C}$, blue curves with markers: detuned states, vertical dotted lines: segment interfaces) across three regimes: (a) Short-wavelength regime ($\lambda_p = 405\,\mathrm{nm} \to 810\,\mathrm{nm}$, $r = 0.472$, $L = 8.10\,\mathrm{mm}$); (b) Reference regime ($\lambda_p = 532\,\mathrm{nm} \to 1064\,\mathrm{nm}$, $r = 1.000$, $L = 17.16\,\mathrm{mm}$); (c) Telecom regime ($\lambda_p = 775\,\mathrm{nm} \to 1550\,\mathrm{nm}$, $r = 1.165$, $L = 20.00\,\mathrm{mm}$). In each case, physical segment lengths and poling periods are redesigned to satisfy the scaled parameters under material dispersion.}
\label{fig:wavelength_scaling}
\end{figure*}

Figure~\ref{fig:wavelength_scaling} demonstrates this scaling relation across three representative physical configurations, where crystal segment lengths and poling periods are designed to match the material dispersion at each target band. For short wavelengths ($\lambda_p = 405\,\mathrm{nm} \to 810\,\mathrm{nm}$), choosing $r = 0.472$ compresses the interaction length to $L = 8.10\,\mathrm{mm}$, while third-order poling ($m = 3$, $\Lambda \approx 7.2\,\mu\mathrm{m}$) avoids sub-micron lithographic constraints. The near-infrared baseline ($\lambda_p = 532\,\mathrm{nm} \to 1064\,\mathrm{nm}$) serves as the unscaled reference ($r = 1.000$, $L = 17.16\,\mathrm{mm}$, $m = 1$, $\Lambda \approx 9.0\,\mu\mathrm{m}$). Finally, for telecommunication wavelengths ($\lambda_p = 775\,\mathrm{nm}$ for degenerate emission near $1550\,\mathrm{nm}$), dilation with $r = 1.165$ scales the sequence to standard commercial formats ($L = 20.00\,\mathrm{mm}$, $m = 1$, $\Lambda \approx 25\,\mu\mathrm{m}$).
Although thermal perturbations induce pronounced trajectory excursions within the crystal bulk, the state vectors exhibit terminal recovery at the exit facet $z = L_{\text{exit}}$, demonstrating that the composite error-cancellation invariants remain preserved under coordinate dilation across distinct operating regimes.

% =========================================================================
% SECTION VI: STRUCTURAL DEFECTS
% =========================================================================
\section{Robustness to Missing Poling Domains}
\label{sec:structural_defects}

Fabrication imperfections during electric-field poling often produce merged or missing ferroelectric domains. We model the cumulative effect of domain omissions as an effective dilation in the average domain period over a crystal of fixed total length $L$. This effective model captures the cumulative phase-mismatch shift associated with domain-count errors over the fixed device length, without resolving the local spatial profile of an individual missing reversal. For a device of nominal domain count $M_{\text{total}}$, omitting $\Delta M$ domain reversals dilates the average domain period to $\bar{\Lambda} \approx \Lambda_0 (1 + \Delta M / M_{\text{total}})$. Consequently, the grating reciprocal wavevector is shifted by $\delta K_{\text{pol}} = \frac{2\pi}{\bar{\Lambda}} - K_{\text{pol}}^{(0)} \approx - (\frac{\Delta M}{M_{\text{total}}}) K_{\text{pol}}^{(0)}$. Because the net phase mismatch is $\Delta k = (k_p - k_s - k_i) - K_{\text{pol}}$, this reduction in grating momentum introduces an effective positive shift in the phase mismatch:
\begin{equation}
\Delta k_{\text{shift}} = - \delta K_{\text{pol}} = + \left( \frac{\Delta M}{M_{\text{total}}} \right) K_{\text{pol}}^{(0)}.
\label{eq:dk_defect}
\end{equation}

\begin{figure*}[t]
\centering
\includegraphics[width=\linewidth]{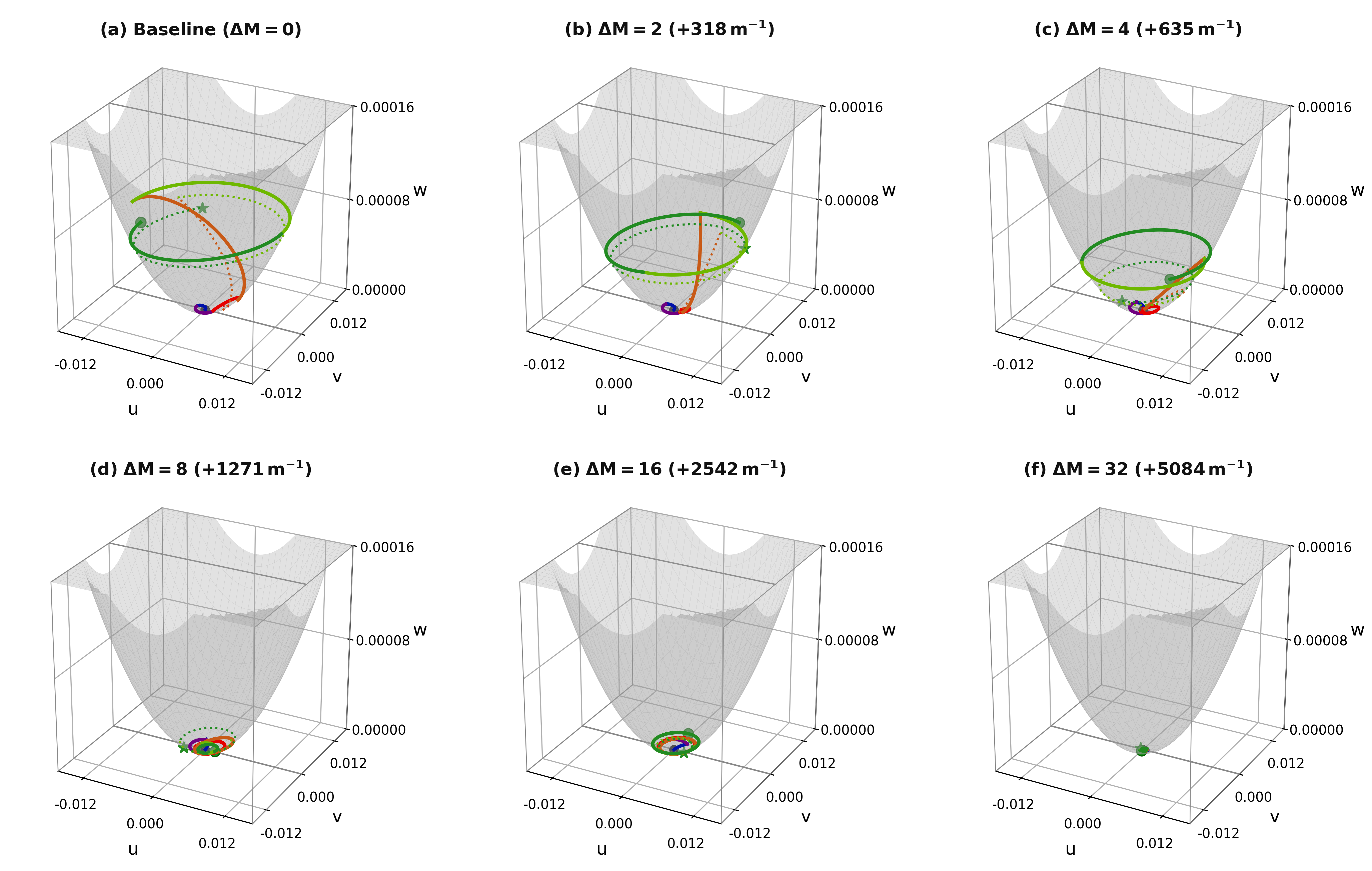}
\caption{\textbf{Geometric Tolerance Against Missing Poling Domains on the $\mathrm{SU}(1,1)$ Hyperboloid.} Three-dimensional trajectories of the state vector $(u, v, w)$ across six structural defect scenarios with missing domain counts $\Delta M \in \{0, 2, 4, 8, 16, 32\}$ (out of $M_{\text{total}} = 4444$). Solid curves denote unperturbed trajectories ($\Delta T = 0^\circ\mathrm{C}$), while dotted curves illustrate the evolution under thermal perturbation ($\Delta T = +1.0^\circ\mathrm{C}$). Cases~1 and 2 reside entirely within the flat-top band ($w = 6.4 \times 10^{-5}$, $100\%$ relative efficiency); Case~3 operates at the passband edge ($83\%$ efficiency); in Cases~4--6, the simulated conversion is strongly suppressed, keeping the state localized near the vacuum point $(u, v, w) = (0, 0, 0)$.}
\label{fig:hyperboloid_defects}
\end{figure*}

Figure~\ref{fig:hyperboloid_defects} tracks the state trajectory on the hyperboloid across six defect scenarios in a $20\,\mathrm{mm}$ crystal ($M_{\text{total}} = 4444$). In the simulated defect configurations, omission of up to $\Delta M = 2$ domains ($\Delta k_{\text{shift}} = +318\,\mathrm{m}^{-1}$) produced no numerically resolved reduction in the normalized conversion metric ($w = 6.4 \times 10^{-5}$, $100\%$ relative efficiency). At $\Delta M = 4$ ($\Delta k_{\text{shift}} = +635\,\mathrm{m}^{-1}$), the state reaches the passband edge, retaining $83\%$ of its peak value ($w = 5.3 \times 10^{-5}$). For severe defects ($\Delta M \ge 8$, $\Delta k_{\text{shift}} \ge +1271\,\mathrm{m}^{-1}$), the simulated conversion is strongly suppressed, keeping the state localized near the vacuum point $(u, v, w) = (0, 0, 0)$.
Thus, the flat-top passband accommodates up to $\Delta M = 2$ missing domains without efficiency loss, setting a practical threshold for fabrication defects.

% =========================================================================
% SECTION VII: QUANTUM METRICS
% =========================================================================
\section{Quantum State Robustness and Phase Coherence Metrics}
\label{sec:quantum_metrics}

\subsection{Phase Coherence and Squeezing Noise Leakage}
In the continuous-variable picture, the Bogoliubov transformation $\mathcal{U}^{(N)}$ generates a two-mode squeezed vacuum state $|\psi(\epsilon)\rangle = \frac{1}{|\alpha(\epsilon)|} \sum_{n=0}^\infty \lambda^n(\epsilon) |n, n\rangle$, where $\lambda(\epsilon) \equiv -\beta(\epsilon)/\alpha^*(\epsilon)$. The relative pair-generation efficiency with respect to the ideal unperturbed setting is given by:
\begin{equation}
\eta(\epsilon) \equiv \frac{w(\epsilon)}{w(0)} = \frac{|\beta(\epsilon)|^2}{|\beta(0)|^2}.
\label{eq:efficiency_metric}
\end{equation}
In continuous-variable architectures, homodyne detection probes field quadratures where phase fluctuations induce quadrature rotation, coupling anti-squeezed variance into the squeezed quadrature:
\begin{equation}
\Delta^2 \hat{x}_-(\Delta\phi) \approx \frac{1}{2} \left[ e^{-2r(\epsilon)} + \Delta\phi^2 e^{2r(\epsilon)} \right], \quad (|\Delta\phi| \ll 1)
\label{eq:quadrature_mixing}
\end{equation}
where $r(\epsilon) = \sinh^{-1}(|\beta(\epsilon)|)$ denotes the squeezing parameter, and $\Delta \phi(\epsilon) \equiv \arg[\beta(\epsilon)] - \arg[\beta(0)]$ represents the relative phase drift. 

To quantify the phase stability of the parametric process, we define the modeled phase-coherence metric:
\begin{equation}
C_\phi(\epsilon) \equiv \cos^2\left( \Delta \phi(\epsilon) \right).
\label{eq:coherence_metric}
\end{equation}
Here, $C_\phi$ serves as a modeled phase-alignment metric reflecting the interference fringe contrast relative to the unperturbed reference state. Within the ideal pure two-mode squeezed state representation, the bipartite entanglement is characterized by the logarithmic negativity $E_N(\epsilon) = 2 r(\epsilon) / \ln 2 \approx 2|\beta(\epsilon)|/\ln 2$, which scales with the conversion amplitude as $E_N(\epsilon) / E_N(0) \approx \sqrt{\eta(\epsilon)}$. The resulting robustness therefore preserves the conversion efficiency, while the modeled phase-alignment metric $C_\phi$ provides an independent characterization of the accumulated phase stability.

\subsection{Comparative Analysis of Operational Scenarios}
Table~\ref{tab:quantum_metrics} quantitatively benchmarks the generated states across five operational scenarios. In the standard periodically poled crystal, introducing an off-axis angle of $\theta = 0.45^\circ$ or omitting $\Delta M = 2$ domain stripes detunes the phase matching well beyond the narrow acceptance bandwidth, leading to state extinction ($\eta_{\text{PP}} \to 0$), where the phase coherence metric becomes physically undefined ($\text{N/A}$).

In contrast, the composite DMCS architecture demonstrates structural resilience across all tested perturbations. Under unsteered off-axis emission (Scenario~2), the state remains within the flat-top band, sustaining $\eta_{\text{DMCS}} = 0.990$ and a modeled phase coherence of $C_{\phi,\text{DMCS}} = 0.995$. Applying the open-loop thermal bias ($\Delta T = -2^\circ\mathrm{C}$, Scenario~3) recenters the state trajectory on the $\mathrm{SU}(1,1)$ hyperboloid, restoring near-unity efficiency ($\eta = 0.999$) and maximal phase alignment ($C_\phi = 1.000$). Furthermore, under stochastic fabrication errors, the DMCS design absorbs up to two missing domains without performance loss (Scenario~4) and maintains robust phase alignment ($C_\phi = 0.955$) even under four missing domains (Scenario~5). Over a pure thermal sweep of $\Delta T \in [-2.5^\circ\mathrm{C}, +2.5^\circ\mathrm{C}]$ at $\theta=0$, the design sustains $\eta \ge 0.954$ and $C_\phi \ge 0.954$, indicating robust preservation of the modeled conversion efficiency and phase-alignment metrics under the considered perturbations.
\vspace{1em}
\begin{table}[tb]
\caption{\textbf{Benchmarking Operational Scenarios Against Standard PP and Nominal DMCS.} Comparison between the benchmark 20~mm periodically poled (PP) crystal and the composite DMCS design (Design~VI, scaled to $L=20\,\mathrm{mm}$) across operational scenarios. $\eta \equiv w(\epsilon)/w(0)$ denotes the relative pair conversion metric, and $C_\phi \equiv \cos^2(\Delta\phi)$ represents the modeled phase-coherence metric.}
\vspace{0.5em}
\label{tab:quantum_metrics}
\resizebox{\columnwidth}{!}{%
\begin{tabular}{lcccc}
\hline\hline
Operational Scenario & $\eta_{\text{PP}}$ & $C_{\phi,\text{PP}}$ & $\eta_{\text{DMCS}}$ & $C_{\phi,\text{DMCS}}$ \\
\hline
1. Baseline ($\theta=0^\circ, \Delta M=0$) & $1.000$ & $1.000$ & $1.000$ & $1.000$ \\
2. Off-axis ($\theta=0.45^\circ$, unsteered) & $0.175$ & $0.046$ & $0.990$ & $0.995$ \\
3. Off-axis ($\theta=0.45^\circ$, steered $\Delta T=-2^\circ\mathrm{C}$) & $0.000$ & \text{N/A} & $\mathbf{0.999}$ & $\mathbf{1.000}$ \\
4. Defect ($\Delta M=2$ missing domains) & $0.000$ & \text{N/A} & $\mathbf{1.000}$ & $\mathbf{1.000}$ \\
5. Defect ($\Delta M=4$ missing domains) & $0.000$ & \text{N/A} & $\mathbf{0.830}$ & $\mathbf{0.955}$ \\
\hline\hline
\end{tabular}%
}
\end{table}
\vspace{-2em}
\section{Conclusion and Outlook}
\label{sec:conclusion}

We have established a geometric framework for error mitigation in continuous-variable parametric down-conversion governed by the non-compact Lie algebra $\mathrm{SU}(1,1)$. Operating in the harmonic regime suppresses the sensitivity of the output pair-generation metric to thermal drifts, transverse angular misalignments, and discrete fabrication errors. This composite error-mitigation framework sustains a relative conversion efficiency of $\eta \ge 0.954$ and a modeled phase coherence of $C_\phi \ge 0.954$ across a $5^\circ\mathrm{C}$ thermal sweep without requiring continuous closed-loop feedback stabilization.

Furthermore, the asymptotic $m^{-2}$ scaling law derived in Lemma~1 demonstrates that higher spatial Fourier harmonics relax sub-micron fabrication constraints, where the power-law penalty can be offset by pump intensity in the undepleted regime. Beyond nonlinear optical crystals, the scaling transformation also applies mathematically to other non-compact bosonic parametric models, such as traveling-wave parametric amplifiers in superconducting circuits, provided that coupling and detuning parameters obey the same dimensionless scaling relations.


\begin{thebibliography}{99}

\bibitem{Menicucci2006}
N. C. Menicucci, P. van Loock, M. Gu, C. Weedbrook, T. C. Ralph, and M. A. Nielsen, \emph{Universal Quantum Computation with Continuous-Variable Cluster States}, Phys. Rev. Lett. \textbf{97}, 110501 (2006).

\bibitem{Asavanant2019}
W. Asavanant, Y. Shiozawa, S. Yokoyama, B. Charoensombutamon, H. Kouhei, S. Takeda, J. Yoshikawa, P. van Loock, and A. Furusawa, \emph{Generation of time-domain-multiplexed two-dimensional cluster state}, Science \textbf{366}, 373 (2019).

\bibitem{Larsen2019}
M. V. Larsen, X. Guo, C. R. Breum, J. S. Neergaard-Nielsen, and U. L. Andersen, \emph{Deterministic generation of a two-dimensional cluster state}, Science \textbf{366}, 369 (2019).

\bibitem{Ekert1991}
A. K. Ekert, \emph{Quantum cryptography based on Bell's theorem}, Phys. Rev. Lett. \textbf{67}, 661 (1991).

\bibitem{Weedbrook2012}
C. Weedbrook, S. Pirandola, R. Garc\'ia-Patr\'on, N. J. Cerf, T. C. Ralph, J. H. Shapiro, and S. Lloyd, \emph{Gaussian quantum information}, Rev. Mod. Phys. \textbf{84}, 621 (2012).

\bibitem{Braunstein2005}
S. L. Braunstein and P. van Loock, \emph{Quantum information with continuous variables}, Rev. Mod. Phys. \textbf{77}, 513 (2005).

\bibitem{Boyd2020}
R. W. Boyd, \emph{Nonlinear Optics}, 4th ed. (Academic Press, London, 2020).

\bibitem{Yurke1986}
B. Yurke, S. L. McCall, and J. R. Klauder, \emph{$SU(2)$ and $SU(1,1)$ interferometers}, Phys. Rev. A \textbf{33}, 4033 (1986).

\bibitem{Fejer1992}
M. M. Fejer, G. A. Magel, D. H. Jundt, and R. L. Byer, \emph{Quasi-phase-matched second harmonic generation: tuning and tolerances}, IEEE J. Quantum Electron. \textbf{28}, 2631 (1992).

\bibitem{Levitt1986}
M. H. Levitt, \emph{Composite pulses}, Prog. Nucl. Magn. Reson. Spectrosc. \textbf{18}, 61 (1986).

\bibitem{Shaka1987}
A. J. Shaka and A. Pines, \emph{Symmetric phase-alternating composite pulses}, J. Magn. Reson. \textbf{71}, 495 (1987).

\bibitem{Vandersypen2005}
L. M. K. Vandersypen and I. L. Chuang, \emph{NMR techniques for quantum control and computation}, Rev. Mod. Phys. \textbf{76}, 1037 (2005).

\bibitem{Cummins2003}
H. K. Cummins, G. Llewellyn, and J. A. Jones, \emph{Tackling systematic errors in quantum logic with composite pulses}, Phys. Rev. A \textbf{67}, 042308 (2003).

\bibitem{Torosov2011}
B. T. Torosov and N. V. Vitanov, \emph{Smooth composite pulses for high-fidelity quantum control}, Phys. Rev. A \textbf{83}, 053420 (2011).

\bibitem{Merrill2014}
J. T. Merrill and K. R. Brown, \emph{Progress in compensating pulse sequences for quantum computing}, Adv. Chem. Phys. \textbf{154}, 241 (2014).

\bibitem{Okoth2019}
C. Okoth, A. Cavanna, T. Santiago-Cruz, and M. Chekhova, \emph{Microscale Generation of Entangled Photons without Momentum Conservation}, Phys. Rev. Lett. \textbf{123}, 263602 (2019).

\bibitem{Chekhova2018}
M. V. Chekhova, S. Germanskiy, D. B. Horoshko, G. K. Kitaeva, M. I. Kolobov, G. Leuchs, C. R. Phillips, and P. A. Prudkovskii, \emph{Broadband bright twin beams and their upconversion}, Opt. Lett. \textbf{43}, 375 (2018).

\bibitem{Branczyk2011}
A. M. Bra\'nczyk, A. Fedrizzi, T. M. Stace, T. C. Ralph, and A. G. White, \emph{Engineered optical nonlinearity for quantum light sources}, Opt. Express \textbf{19}, 55 (2011).

\bibitem{Tambasco2016}
J.-L. Tambasco, A. Boes, L. G. Helt, M. J. Steel, and A. Mitchell, \emph{Domain engineering algorithm for practical and effective photon sources}, Opt. Express \textbf{24}, 19616 (2016).

\bibitem{Graffitti2018}
F. Graffitti, P. Barrow, M. Proietti, D. Kundys, and A. Fedrizzi, \emph{Independent engineering of joint spectral amplitude and phase in spontaneous parametric down-conversion}, Optica \textbf{5}, 514 (2018).

\bibitem{Erew2026}
M. Erew, Y. Reches, O. Yesharim, M. Goldstein, A. Arie, and H. Suchowski, \emph{Efficient Robust Spontaneous Parametric Down-Conversion via Detuning Modulated Composite Segments Designs}, Laser \& Photonics Reviews \textbf{20}, e00947 (2026).

\end{thebibliography}
\end{document}